\title{A Timeless Game: A Game-Theoretic Model of Mass–Geometry Relations}
\author{Milad Ghadimi \\
        Technische Universit\"at Dresden \\
        01069 Dresden, Germany \\
        \texttt{mil.ghadimi@gmail.com}}
\theoremstyle{plain}
\newtheorem{proposition}{Proposition}
\newtheorem{definition}{Definition}
\newcommand{\Oplayer}{\ensuremath{\mathsf{O}}}
\newcommand{\Splayer}{\ensuremath{\mathsf{S}}}
\begin{document}
\maketitle

\begin{abstract}
We develop a minimal, \emph{timeless} game–theoretic representation of the mass–geometry relation. An “Object” (mass) and “Space” (geometry) choose strategies in a static normal–form game; utilities encode \emph{stability} as mutual consistency rather than dynamical payoffs. In a $2\times 2$ toy model the equilibria correspond to \enquote{light–flat} and \enquote{heavy–curved} configurations; a continuous variant clarifies when only trivial interior equilibria appear versus a continuum along a matching ray. Philosophically, the point is representational: a global description may be static while the \emph{experience} of temporal flow for embedded observers arises from informational asymmetry, coarse–graining, and records. The framework separates \emph{time as parameter} from \emph{relational constraint} without committing to specific physical dynamics.

\end{abstract}
\section{Introduction}
A long-standing question in the philosophy of physics is whether time is fundamental or merely an emergent bookkeeping device for relational change. From McTaggart’s argument for the unreality of time \citep{mctaggart1908unreality}, through Barbour’s notion of a timeless physics \cite{barbour1999end}, to block-universe interpretations of relativity \citep{einstein1916,rovelli2018order}, a long tradition suggests that a \emph{global} description of the universe may be static, while \emph{local} observers experience a \enquote{flow} due to their embeddedness and limited perspective.

In this spirit, we present a compact formalization of that idea using static game theory. We model an \enquote{Object} (mass distribution) and \enquote{Space} (geometry) as players in a normal-form (one-shot) game, where payoffs encode relational \emph{stability} or self-consistency. No explicit temporal parameter appears. Nonetheless, the Nash equilibria of the game correspond to \emph{frozen} configurations analogous to stationary solutions in physics.

\paragraph{Contributions.}
\begin{enumerate}[leftmargin=*,itemsep=2pt,topsep=2pt]
\item We introduce \emph{Curvature Clash}, a two-player timeless game with physically motivated strategies for mass and curvature.
\item We prove that a simple $2\times 2$ instantiation has two pure-strategy Nash equilibria, corresponding to \enquote{flat–light} (low-mass flat-space) and \enquote{curved–heavy} (high-mass curved-space) configurations.
\item We analyze a continuous variant, identify best responses, and clarify when the only equilibrium is trivial versus when a continuum of equilibria appears.
\item We discuss how embedded observers can recover an apparent temporal ordering from static equilibria via asymmetries and information constraints. We analyze a continuous variant, derive best responses, and identify conditions under which the interior equilibrium is trivial or forms a nontrivial continuum.
\end{enumerate}
\paragraph{Scope.} The model is a deliberately austere toy. Payoffs are heuristic stability scores, not derived from an action principle; no quantum constraints are imposed; and the two–agent idealization abstracts from locality and anisotropy. The value lies in clarifying a thesis about representation (timeless global description) and about epistemology (emergent temporal phenomenology).

\section{Background}
\subsection{Philosophy of Time}
Throughout history, philosophers have debated the nature of time. Parmenides argued that reality is one, unchanging, and timeless, with change being illusory \citep{parmenides_stanford}. In contrast, Heraclitus emphasized flux, viewing change as fundamental \citep{heraclitus_stanford}. St. Augustine posited that time was created with the universe, with God existing outside of time \citep{augustine_time}.

In the modern era, Newton advocated for absolute time, an independent entity, while Leibniz viewed time as relational, derived from events \citep{newton_leibniz_stanford}. Mach echoed this, stating that \enquote{time is nothing but change} \citep{mach_principle}. McTaggart distinguished the A-series (events as past, present, future) and B-series (earlier-later relations), arguing that time leads to contradictions and is unreal \citep{mctaggart1908unreality}.

Contemporary views include presentism (only the present exists) \citep{presentism_stanford}, the growing-block theory (past and present exist, future does not) \citep{broad_growing_block}, and eternalism (all times exist equally in a block universe) \citep{eternalism_stanford}.
\subsection{Timelessness in Modern Physics}
Minkowski’s 1909 unification makes the tenseless picture explicit: spacetime is a four-dimensional geometric whole with no objective global \enquote{now} \citep{minkowski1909space}. Einstein's relativity fused space and time into spacetime, supporting the block universe where past, present, and future are illusions \citep{einstein1916,putnam_relativity}. Einstein remarked that the distinction between past, present, and future is a \enquote{stubbornly persistent illusion} \citep{einstein_quote}. 

In quantum gravity, the Wheeler-DeWitt equation lacks a time parameter, leading to the problem of time \citep{wheeler_dewitt}. Barbour's timeless universe posits configurations in \enquote{Platonia}, with time as an illusion \citep{barbour1999end}. Rovelli's loop quantum gravity views time as emergent from quantum relations, via the thermal time hypothesis \citep{rovelli_thermal_time}. The Page-Wootters mechanism derives time from entanglement \citep{page_wootters}.
Dissenters like Smolin argue time is fundamental \citep{smolin2013time}.

\section{The Model: \emph{Curvature Clash}}
\begin{definition}[\emph{Curvature Clash}]
Let the player set be $I = \{\Oplayer,\Splayer\}$, with strategy sets
$S_{\Oplayer}=\{L,H\}$ (low/high mass), $S_{\Splayer}=\{F,C\}$ (flat/curved geometry).
Payoffs for each player represent \emph{stability scores}, i.e., how self-consistent the chosen mass--curvature combination is as a global configuration (higher is better).
\end{definition}
A concrete specification is given by the payoff matrix in Table~\ref{tab:payoff}, where each cell lists the payoff pair $(u_{\Oplayer}, u_{\Splayer})$.
\begin{table}[h!]
\centering
\begin{tabular}{@{}lcc@{}}
\toprule
& $F$ (Flat) & $C$ (Curved) \\
\midrule
$L$ (Low mass) & $(5,5)$ & $(3,2)$ \\
$H$ (High mass) & $(2,3)$ & $(4,4)$ \\
\bottomrule
\end{tabular}
\caption{A minimal payoff specification for \emph{Curvature Clash}. }
\label{tab:payoff}
\end{table}

\begin{proposition}[Equilibria of the $2\times 2$ model]
The game in Table~\ref{tab:payoff} has two pure-strategy Nash equilibria: $(L, F)$ and $(H, C)$.
\end{proposition}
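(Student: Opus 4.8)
The plan is to verify the Nash condition directly by exhausting the four pure-strategy profiles, which suffices because $|S_{\Oplayer}|\cdot|S_{\Splayer}| = 4$, so a finite case check is complete. Recall that a pure profile is a Nash equilibrium precisely when neither player can strictly raise their own payoff by a unilateral deviation; since each player has exactly two strategies, this reduces to checking one inequality per player per profile, i.e.\ eight comparisons in total, read straight off Table~\ref{tab:payoff}.

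First I would confirm the two ``diagonal'' profiles. At $(L,F)$ the payoff vector is $(5,5)$: player $\Oplayer$'s only alternative against $F$ yields $2<5$, and player $\Splayer$'s only alternative against $L$ yields $2<5$, so both best-response inequalities hold strictly and $(L,F)$ is a (strict) equilibrium. At $(H,C)$ the payoff vector is $(4,4)$: switching to $L$ against $C$ gives $\Oplayer$ only $3<4$, and switching to $F$ against $H$ gives $\Splayer$ only $3<4$, so $(H,C)$ is likewise a strict equilibrium. Then I would rule out the two ``off-diagonal'' profiles: at $(L,C)$ player $\Oplayer$ earns $3$ but can deviate to $H$ (still against $C$) for $4>3$, a profitable deviation; symmetrically at $(H,F)$ player $\Oplayer$ earns $2$ but can deviate to $L$ for $5>2$. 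Having examined all four profiles, $(L,F)$ and $(H,C)$ are the only pure-strategy equilibria, as claimed.

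I do not expect a genuine obstacle here: the argument is a finite best-response tabulation, and the $2\times 2$ structure makes the case analysis automatically exhaustive. The only point requiring mild care is bookkeeping — making sure ``pure-strategy Nash equilibrium'' is tested against \emph{all} unilateral deviations at \emph{every} profile, rather than, say, only the diagonal ones. One may also observe in passing that the payoffs make \emph{Curvature Clash} a coordination game (each player's incentives favour matching the other's ``heaviness/curvature'' level), so a mixed equilibrium exists as well; but since the proposition concerns only pure strategies, that refinement is not needed for the proof.
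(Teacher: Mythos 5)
Your proof is correct and takes essentially the same route as the paper: both amount to the same finite best-response tabulation over the $2\times 2$ payoff matrix, with your version merely organized profile-by-profile (and slightly more explicit in ruling out the off-diagonal profiles $(L,C)$ and $(H,F)$, which the paper's best-response computation handles implicitly).
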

\begin{proof}
Considering best responses: If $\Splayer$ plays $F$, then $\Oplayer$ prefers $L$ (since $5 > 2$). If $\Splayer$ plays $C$, then $\Oplayer$ prefers $H$ (since $4 > 3$). Conversely, if $\Oplayer$ plays $L$, $\Splayer$ prefers $F$ (since $5 > 2$); and if $\Oplayer$ plays $H$, $\Splayer$ prefers $C$ (since $4 > 3$). Thus, $(L, F)$ and $(H, C)$ are mutual best responses and hence Nash equilibria.
\end{proof}
\paragraph{Interpretation.} The equilibrium $(L, F)$ corresponds to a self-consistent \enquote{light-mass flat-space} configuration (an almost empty, flat universe), whereas $(H, C)$ represents a \enquote{heavy-mass curved-space} configuration (a strongly gravitating regime). Both are \emph{timeless} in the sense that nothing \enquote{evolves}: we are identifying only static, stable relations between mass and geometry.
\subsection{A Continuous Variant}
To explore a graded strategy space, let
$s_{\Oplayer}\in[0,m]$, $s_{\Splayer}\in[0,k]$
denote the choice of mass level and curvature degree, respectively. Consider payoff functions of the form
\begin{align}
u_{\Oplayer}(s_{\Oplayer}, s_{\Splayer}) &= -\Big(s_{\Oplayer} - \alpha s_{\Splayer}\Big)^2 + \beta, \label{eq:uO}\\
u_{\Splayer}(s_{\Oplayer}, s_{\Splayer}) &= -\Big(s_{\Splayer} - \gamma s_{\Oplayer}\Big)^2 + \delta, \label{eq:uS}
\end{align}
where $\alpha,\gamma > 0$ encode the expected \enquote{matching} between mass and curvature, and $\beta,\delta$ are positive constants (baseline payoff levels).

\begin{proposition}[Best responses and equilibria in the continuous model]
For interior choices (ignoring boundary constraints), the best-response functions are
$\mathrm{BR}_{\Oplayer}(s_{\Splayer})=\alpha s_{\Splayer}$, $\mathrm{BR}_{\Splayer}(s_{\Oplayer})=\gamma s_{\Oplayer}$.
The simultaneous Nash equilibria satisfy $s_{\Oplayer} = \alpha s_{\Splayer}$ and $s_{\Splayer} = \gamma s_{\Oplayer}$, hence $(1 - \alpha\gamma) s_{\Oplayer} = 0$. Consequently:
\begin{enumerate}[label=(\roman*), leftmargin=*]
\item If $\alpha \gamma \neq 1$, the unique equilibrium (in the interior of $[0,m]\times[0,k]$) is the trivial solution $s_{\Oplayer}^* = s_{\Splayer}^* = 0$.
\item If $\alpha \gamma = 1$, there exists a continuum of equilibria along the ray $s_{\Oplayer}^* = \alpha s_{\Splayer}^*$ (for values within the allowed range $[0,m]\times[0,k]$).
\end{enumerate}
\end{proposition}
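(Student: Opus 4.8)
The plan is to compute the best-response functions by unconstrained maximization of each player's payoff, then intersect them and read off the two cases.

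\medskip

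\noindent\textbf{Step 1: Best responses.} Fix $s_{\Splayer}$. The payoff $u_{\Oplayer}(s_{\Oplayer}, s_{\Splayer}) = -(s_{\Oplayer} - \alpha s_{\Splayer})^2 + \beta$ is a downward parabola in $s_{\Oplayer}$, maximized exactly at $s_{\Oplayer} = \alpha s_{\Splayer}$; indeed $\partial u_{\Oplayer}/\partial s_{\Oplayer} = -2(s_{\Oplayer} - \alpha s_{\Splayer})$ vanishes there and the second derivative $-2 < 0$ confirms a strict maximum. Hence $\mathrm{BR}_{\Oplayer}(s_{\Splayer}) = \alpha s_{\Splayer}$, and symmetrically $\mathrm{BR}_{\Splayer}(s_{\Oplayer}) = \gamma s_{\Oplayer}$, provided these lie in the allowed intervals (the interior assumption; on the boundary the maximizer would be clamped, which we set aside as stated).

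\medskip

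\noindent\textbf{Step 2: Fixed-point condition.} A strategy profile $(s_{\Oplayer}^*, s_{\Splayer}^*)$ is a Nash equilibrium iff each coordinate is a best response to the other, i.e. $s_{\Oplayer}^* = \alpha s_{\Splayer}^*$ and $s_{\Splayer}^* = \gamma s_{\Oplayer}^*$. Substituting the second into the first gives $s_{\Oplayer}^* = \alpha\gamma\, s_{\Oplayer}^*$, that is $(1 - \alpha\gamma)\, s_{\Oplayer}^* = 0$.

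\medskip

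\noindent\textbf{Step 3: Case split.} If $\alpha\gamma \neq 1$, the factor $1 - \alpha\gamma$ is nonzero, forcing $s_{\Oplayer}^* = 0$ and then $s_{\Splayer}^* = \gamma \cdot 0 = 0$; this is case (i). If $\alpha\gamma = 1$, the equation $(1-\alpha\gamma)s_{\Oplayer}^* = 0$ holds identically, and the remaining constraint $s_{\Oplayer}^* = \alpha s_{\Splayer}^*$ (equivalently $s_{\Splayer}^* = \gamma s_{\Oplayer}^*$, since $\gamma = 1/\alpha$) cuts out a one-parameter family; restricting to the box $[0,m]\times[0,k]$ yields the claimed ray segment, which is case (ii). One should note in passing that $(0,0)$ always lies on this ray, so the continuum includes the trivial equilibrium as a special point.

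\medskip

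\noindent I expect no serious obstacle here: the payoffs are strictly concave quadratics, so the analysis is essentially linear algebra once differentiation is done. The only point requiring a word of care is the interior caveat — the stated best responses are valid maximizers over $[0,m]$ and $[0,k]$ only when $\alpha s_{\Splayer} \le m$ and $\gamma s_{\Oplayer} \le k$ for the relevant values; the proposition sidesteps this by explicitly restricting attention to the interior, so I would simply flag the assumption rather than carry out the boundary case analysis.
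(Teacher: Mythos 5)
Your proof is correct and follows essentially the same route as the paper's: maximize each concave quadratic to get the linear best responses, intersect them to obtain $(1-\alpha\gamma)s_{\Oplayer}=0$, and split on whether $\alpha\gamma=1$. Your added remarks on the second-order condition and the interior/boundary caveat are sound but do not change the argument.
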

\begin{proof}
From \eqref{eq:uO}, holding $s_{\Splayer}$ fixed, $u_{\Oplayer}$ is a concave quadratic in $s_{\Oplayer}$ maximized when $s_{\Oplayer} = \alpha s_{\Splayer}$. Similarly, \eqref{eq:uS} is maximized (for fixed $s_{\Oplayer}$) when $s_{\Splayer} = \gamma s_{\Oplayer}$. Setting these best responses equal gives $s_{\Oplayer} = \alpha s_{\Splayer}$ and $s_{\Splayer} = \gamma s_{\Oplayer}$. This system has the solutions stated: if $\alpha\gamma \neq 1$, the only solution is $s_{\Oplayer}=s_{\Splayer}=0$; if $\alpha\gamma = 1$, any $(s_{\Oplayer}, s_{\Splayer})$ satisfying $s_{\Oplayer} = \alpha s_{\Splayer}$ is a solution (subject to bounds).
\end{proof}
\paragraph{Remarks.} The quadratic coupling form \eqref{eq:uO}--\eqref{eq:uS} is intentionally minimal and tends to select the trivial equilibrium unless the fine-tuned coupling condition $\alpha\gamma=1$ holds. To generically obtain a nontrivial interior equilibrium, one can incorporate additional linear benefits and self-costs. For example, consider modified payoffs:
\begin{equation*}
u_{\Oplayer} = -a\big(s_{\Oplayer} - \alpha s_{\Splayer}\big)^2 + b s_{\Oplayer} - c s_{\Oplayer}^2, \qquad
u_{\Splayer} = -d\big(s_{\Splayer} - \gamma s_{\Oplayer}\big)^2 + e s_{\Splayer} - f s_{\Splayer}^2,
\end{equation*}
with $a,d,c,f > 0$ and $b,e \ge 0$. These payoffs yield affine best-response functions (linear in the other player’s strategy), which under mild conditions will intersect at a unique interior point $(s_{\Oplayer}^*, s_{\Splayer}^*) \in (0,m)\times(0,k)$. In short, slight generalizations of the utility functions can ensure a unique nontrivial equilibrium without requiring $\alpha\gamma = 1$ exactly.
When $\alpha\gamma\neq 1$, our minimal quadratic “matching” form selects the trivial interior equilibrium $(0,0)$. Depending on bounds and added linear terms, corner equilibria may arise. In the affine extension just given, the linear best–response curves generically intersect at a unique interior point when the product of their slopes is less than one.
\section{Discussion: Emergent Temporal Flow}
Our framework is \emph{timeless}: equilibria encode relational consistency conditions rather than dynamical trajectories. Yet observers embedded in subsystems and working with coarse descriptions can experience an apparent \enquote{before/after}. Three generic mechanisms suffice:
\begin{enumerate}[leftmargin=*,itemsep=2pt,topsep=2pt]
    \item \textbf{Asymmetric information.} Limited access to the global configuration forces sequential inference, which is \emph{experienced} as evolution.
    \item \textbf{Coarse-graining and entropy.} Ordering macrostates by increasing entropy induces an arrow of time at the descriptive level.
    \item \textbf{Records and memory.} Local records pick out consistent slices of a static whole, allowing observers to reconstruct histories.
\end{enumerate}
This picture coheres with block-universe intuitions in relativity \citep{einstein1916,rovelli2018order} and with timeless, relational accounts in philosophy of physics \citep{barbour1999end}. It also resonates (purely conceptually) with emergent-time themes in quantum gravity without presupposing any such theory \citep{page_wootters,wheeler_dewitt,rovelli_thermal_time}.

\section{Limitations and Scope}
\textbf{Abstraction.} Payoffs are heuristic stability scores rather than derived from an action principle; they serve to illustrate a thesis about representation, not to predict phenomena.\\
\textbf{No quantum dynamics.} We draw conceptual parallels to emergent-time programs, but the model does not include quantum constraints (e.g., Wheeler–DeWitt).\\
\textbf{Two-agent coarse idealization.} Matter and geometry are field-like and distributed; our two-player reduction ignores locality and anisotropy.\\
\textbf{Equilibrium selection.} Multiple equilibria (discrete case) and trivial interior equilibria (minimal continuous case) show that selection is a separate, philosophical issue about rational constraint—not addressed here.

\section{Conclusion}
\emph{Curvature Clash} recasts the mass–geometry link as a problem of timeless consistency. The discrete model yields two natural equilibria (\enquote{light–flat}, \enquote{heavy–curved}); the continuous matching form either collapses to a trivial interior point or, under exact coupling, generates a continuum along a matching ray. The upshot is conceptual: one can cleanly separate (i) \emph{global description as static} from (ii) \emph{temporal phenomenology as epistemic and coarse–grained}. Time, as an external parameter, is not required to articulate self–consistent mass–geometry relations, even though temporal experience remains intelligible for embedded observers.

\begingroup
\bibliographystyle{plainnat} 
\bibliography{sn-bibliography}

\end{document}